\newtheorem{lemma}{Lemma}
\newtheorem{definition}{Definition}
\newtheorem{theorem}{Theorem}
\title{Solving the $r$-pseudoforest Deletion Problem in Time Independent of $r$}
\author{Bin Sheng\thanks{Email: shengbinhello@nuaa.edu.cn}}
\affil{College of Computer Science and Technology, Nanjing University of Aeronautics and Astronautics, Collaborative Innovation Center of Novel Software Technology and Industrialization, Nanjing, Jiangsu, 211106, PR China}
\begin{document}

\maketitle

\begin{abstract}
The feedback vertex set problem is one of the most studied parameterized problems. Several generalizations of the problem have been studied where one is to delete vertices to obtain graphs close to acyclic.
In this paper, we give an FPT algorithm for the problem of deleting at most $k$ vertices to get an $r$-pseudoforest. A graph is an $r$-pseudoforest if we can delete at most $r$ edges from each component to get a forest. Philip et al. introduced this problem and gave an $O^*(c_{r}^{k})$ algorithm for it, where $c_r$ depends on $r$ double exponentially. In comparison, our algorithm runs in time $O^*((10k)^{k})$, independent of $r$.
\end{abstract}

\section{Preliminary}
The Feedback Vertex Set problem is one of the 21 problems proved to be NP-hard by Karp \cite{DBLPconf/coco}. It asks to delete a minimum number of vertices to make a given graph into a forest. The problem has important applications in  artificial intelligence \cite{DBLP:journals/siamcomp/Bar-YehudaGNR98,DBLP:journals/jair/BeckerBG00}, bio-computing \cite{DBLP:journals/tcbb/BlinSV10,DBLP:journals/jbcb/FuJ07}, operating system research \cite{DBLP:books/aw/SilberschatzG94}, and so on.

%The problem has attracted a lot of attention from the parameterized complexity community. Both its undirected and directed versions have been well studied \cite{DBLP:journals/algorithmica/CaoC015,DBLP:journals/jcss/ChenFLLV08,DBLP:journals/talg/ChitnisCHM15,DBLP:journals/siamdm/CyganPPW13}.
%Feedback vertex set problem is fixed-parameter tractable when parameterized by the solution size, the number of vertices to delete. For the undirected feedback vertex set problem, the state-of-the-art algorithm runs in time $O^*(3.619^k)$ in the deterministic setting \cite{DBLP:journals/algorithmica/CaoC015},
%and $O^*(3^k)$ in the randomized setting \cite{DBLP:conf/focs/CyganNPPRW11}. Here the $O^*$ notation hides polynomial factors in $n$, and $k$ is the solution size.

 The Feedback Vertex Set problem has attracted a lot of attention from the parameterized complexity community. Both its undirected and directed versions are fixed-parameter tractable when parameterized by the solution size \cite{DBLP:journals/algorithmica/CaoC015,DBLP:journals/jcss/ChenFLLV08,DBLP:journals/talg/ChitnisCHM15,DBLP:journals/siamdm/CyganPPW13}.
For the undirected case, the state-of-the-art algorithm runs in time $O^*(3.619^k)$ in the deterministic setting \cite{DBLP:journals/algorithmica/CaoC015},
and $O^*(2.7^k)$ in the randomized setting \cite{DBLP:conf/soda/LiN20}. Here the $O^*$ notation hides polynomial factors in $n$, and $k$ is the solution size.

Relaxing the acyclic requirement, researchers have defined several classes of almost acyclic graphs. A graph $F$ is an \textit{$r$-pseudoforest} if we can delete at most $r$ edges from each component in $F$ to get a forest. A \textit{pseudoforest} is a 1-pseudoforest. An \textit{almost $r$-forest} is a graph from which we can delete $r$ edges to get a forest.

%{As a generalization of feedback vertex set problem, }
Philip et al.\cite{DBLP:journals/siamdm/PhilipRS18} introduced the problem of deleting vertices to get an almost acyclic graph. There are several results in this line of research. In \cite{DBLP:journals/siamdm/PhilipRS18}, the authors gave an $O^*(c_{r}^{k})$ algorithm for $r$-pseudoforest deletion, which asks to delete at most $k$ vertices to get an $r$-pseudoforest. They also gave an $O^*(7.56^k)$ time algorithm for the problem of pseudoforest deletion. Bodlaender et al. \cite{DBLP:journals/dam/BodlaenderOO18} gave an improved algorithm for pseudoforest deletion running in time $O(3^knk^{O(1)})$.

Rai and Saurabh \cite{DBLP:journals/tcs/RaiS18} gave an $O^{*}(5.0024^{(k+r)})$ algorithm for the Almost Forest Deletion problem, which asks to delete at most $k$ vertices to get an almost $r$-forest. Lin et al.\cite{DBLP:journals/ipl/LinFWCFL18} gave an improved algorithm for this problem that runs in time $O^{*}(5^k4^r)$.

To the author's knowledge, there has been no follow-up research for the problem of $r$-pseudoforest deletion. In this paper, we provide a simple branching algorithm that runs in time $O^{*}((10k)^k)$. The $c_r$ in the algorithm by Philip et al. \cite{DBLP:journals/siamdm/PhilipRS18} is a large constant that depends on $r$ double exponentially. In contrast, the running time of our algorithm is independent of $r$, which is somewhat surprising. %Our algorithm improves over their result for 2-pseudoforest when the solution size is small.

\section{Notation and Terminology}
Here we give a brief list of the graph theory concepts used in this paper; for other notation and terminology, we refer readers to \cite{DBLP:books/daglib/0009415}.

For a graph $G=(V(G),E(G))$, $V(G)$ and $E(G)$ are its vertex set and edge set respectively.
A non-empty graph $G$ is \textit{connected} if there is a path between every pair of vertices. Otherwise, we call it \textit{disconnected}.  %A \textit{cut set} in a connected graph is a set of vertices whose deletion results in a disconnected graph. A connected graph $G$ is said to be \textit{$k$-connected} if every cut set of it contains at least $k$ vertices.  {A connected graph $G$ is \textit{$k$-edge-connected} if $G$ remains connected whenever less than $k$ edges are deleted from it.} An edge $e\in E(G)$ in a connected graph $G$ is called a \textit{bridge} if $G-e = (V(G), E(G)-e)$ is disconnected.  A \textit{block} in a graph is a maximal 2-connected subgraph.

The \textit{multiplicity} of an edge is the appearance number of it in the multigraph. An edge $uv$ is a \textit{loop} if $u=v$, and we call it a  loop at $u$. The \textit{degree} of a vertex is the number of its appearances as end-vertex of some edge. \textit{Bypassing} a vertex $v$ of degree 2 means to delete $v$ and add an edge between its two neighbors $u$ and $w$(even if $u=w$ or there is already an edge between $u$ and $w$). A \textit{forest} is a graph in which there is no cycle and a \textit{tree} is a connected forest.  A graph $H=(V(H),E(H))$ is a \textit{subgraph} of a graph $G=(V(G),E(G))$, if $V(H)\subseteq V(G)$ and $E(H)\subseteq E(G)$. A subgraph $H$ of $G$ is an \textit{induced subgraph} of $G$ if for every $u,v\in V(G)$, edge $uv\in E(H)$ if and only if $uv\in E(G)$. For $X\subseteq V(G)$, $G[X]$ denotes the subgraph of $G$ induced by $X$.

For a positive integer $r$, a graph is an \textit{r-pseudoforest} if we can make each component into a forest by deleting at most $r$ edges. %A \textit{pseudoforest} is a 1-pseudoforest.
%The \emph{contraction} of an edge $uv$ in $G$ removes $u$ and $v$ from $G$, and replaces them by a new vertex adjacent to exactly all the neighbours of $u$ and $v$ in $G$. {Note that, by its definition, edge contraction creates neither self-loops nor multiple edges.}

\section{FPT Algorithm for $r$-Pseudoforest Deletion}
 In this section, we give an FPT algorithm for $r$-pseudoforest deletion. Let us give the formal problem definition first.

 \quad

  { \textbf{$r$-Pseudoforest Deletion}} \nopagebreak

    \emph{Instance:} An undirected graph $G$, two integers $r$ and $k$.

    \emph{Parameter:} $k$.

    \emph{Output:} Decide if there is a set $X\in V(G)$ with $|X|\leq k$ such that $G-X$ is an $r$-pseudoforest.

\quad

Let $(G,k)$ be an instance of $r$-pseudoforest deletion, where $G$ contains $m$ edges and $n$ vertices.
%Without loss of generality, we may assume there is no connected component in $G$ which is a 2-pseudoforest as there is no need to delete any vertex from such a component.

%\begin{definition}
%A 2-pseudoforest with 2 vertices and 3 multiple edges is called a {\bf\textit small spindle}, see Figure \ref{fig:lemmaProof1} for an illustration.
%\end{definition}

%
%\begin{figure}
%\centering
%\begin{tikzpicture}[scale = 0.5]
%
%\draw [](5, 0) ellipse (2.5 and 1.2);
%
%\draw [](2.48,0)node[left]{$u$};
%
%\draw [](7.52,0)node[right]{$v$};
%
%\draw [](2.48,0)[fill]circle[radius=0.07];
%\draw [](7.52,0)[fill]circle[radius=0.07];
%
%\draw [](2.46,0)--(7.52,0);
%
%\end{tikzpicture}
%\caption{A small spindle.}
% \label{fig:lemmaProof1}
%\end{figure}

Now we give some reduction rules to simplify the given instance.

\textbf{Reduction Rule 1}\label{R:Reduction1}: If there is a component $C$ in $G$ that is an $r$-pseudoforest, then delete $C$ and get a new instance $(G-C,k)$.

Reduction Rule 1 is safe since there is no need to delete any vertex from a component that is an $r$-pseudoforest.

\textbf{Reduction Rule 2}\label{R:Reduction2}: If there are at least $r+1$ loops at a vertex $v$, then delete $v$ and decrease $k$ by 1.

Reduction Rule 2 is safe since the graph $G[v]$ is not an $r$-pseudoforest and every solution must contain $v$.

\textbf{Reduction Rule 3}\label{R:Reduction3}: If $G$ contains an edge $uv$ of multiplicity greater than $r+2$, then reduce its multiplicity to $r+2$.

Every solution intersects $\{u, v\}$ if edge $uv$ has multiplicity at least $r+2$. Reducing the multiplicity of $uv$ to $r+2$ does not affect the set of solutions. Thus, Reduction Rule 3 is safe.

\textbf{Reduction Rule 4}\label{R:Reduction4}: If $G$ contains a vertex $v$ of degree at most 1, then delete $v$ and get a new instance $(G-v,k)$.

On the one hand, every solution of $(G,k)$ is a solution of $(G-v,k)$. On the other hand, attaching a leaf to any component of an $r$-pseudoforest still gives an $r$-pseudoforest. And so every solution of $(G-v,k)$ is also a solution of $(G,k)$ since $d_G(v)\leq 1$. Thus, Reduction Rule 4 is safe.

\textbf{Reduction Rule 5}\label{R:Reduction6}: If $G$ contains a vertex $v$ of degree 2, then bypass $v$.

\begin{lemma}
Reduction Rule 5 is safe.%\textcolor{red}{needs revision}
\end{lemma}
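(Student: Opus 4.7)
The plan is to prove that $(G, k)$ is a yes-instance if and only if $(G', k)$ is, where $G'$ denotes $G$ with $v$ bypassed. The crucial preliminary step is to establish that bypassing a degree-2 vertex preserves the cyclomatic number of the component containing $v$: if $C$ is that component and $C'$ is its image after bypassing, then $|V(C')|=|V(C)|-1$ and $|E(C')|=|E(C)|-1$ (two edges at $v$ are removed, one new edge $uw$ is added), while $C'$ remains connected because $u$ and $w$ are joined by the new edge. Hence $|E(C')|-|V(C')|+1 = |E(C)|-|V(C)|+1$, so $C'$ has the same number of independent cycles as $C$, and all other components are unaffected. Since a graph is an $r$-pseudoforest exactly when every component has cyclomatic number at most $r$, bypassing preserves the $r$-pseudoforest property in both directions. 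This argument also covers the subcase $u=w$, where the new edge becomes a loop at $u$; a loop at $v$ itself does not arise, since a degree-2 vertex with a loop would form an isolated component already eliminated by Reduction Rules 1 and 2.

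For the forward direction, given a solution $X$ for $G$, I would split on whether $v \in X$. If $v \notin X$, set $X' = X$; depending on whether $u, w \in X$, either $G'-X'$ is the bypass of the $r$-pseudoforest $G-X$ at $v$, or it is obtained from $G-X$ by deleting a pendant or isolated copy of $v$. In either case, the key lemma or a direct argument shows $G'-X'$ is an $r$-pseudoforest. If $v \in X$, the naive choice $X' = X \setminus \{v\}$ fails when both $u, w \notin X$, because then $G'-X'$ differs from $G-X$ by the added edge $uw$, which may create a forbidden cycle. The fix is a swap: take $X' = (X \setminus \{v\}) \cup \{u\}$, so that $|X'| \le |X| \le k$ and $G'-X' = (G-X) - u$, an induced subgraph of the $r$-pseudoforest $G-X$ and therefore itself an $r$-pseudoforest.

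For the reverse direction, a solution $X'$ for $G'$ (with $v \notin X'$ automatically) is also a solution for $G$ by a symmetric case analysis: either $G-X'$ bypasses at $v$ to give $G'-X'$ (when $u, w \notin X'$) and the key lemma applies, or $v$ has degree at most one in $G-X'$, in which case $G-X'$ is obtained from $G'-X'$ by reattaching $v$ as a pendant or isolated vertex, which does not change the $r$-pseudoforest status.

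The main obstacle I anticipate is the subcase $v \in X$ with $u, w \notin X$ in the forward direction: one cannot simply drop $v$ from the solution, and the correctness of the replacement $X' = (X \setminus \{v\}) \cup \{u\}$, justified by the closure of $r$-pseudoforests under induced subgraphs, is the technical heart of the argument.
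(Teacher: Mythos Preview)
Your proof is correct and follows essentially the same approach as the paper: the same case split on whether $v\in X$ and on which of $u,w$ survive, and in particular the same swap $X'=(X\setminus\{v\})\cup\{u\}$ to handle the problematic subcase $v\in X$, $u,w\notin X$. The only cosmetic difference is that you justify invariance of the $r$-pseudoforest property under bypassing via the cyclomatic number, whereas the paper simply remarks that the vertex and edge counts drop by the same amount.
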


\begin{proof}
Let $H$ be the graph obtained from $G$ by bypassing $v$. Let $N_G(v)=\{u, w\}$. We show that $(H,k)$ is a yes-instance if and only if $(G,k)$ is a yes-instance.

On the one hand, assume $(H,k)$ is a yes-instance, and $X$ is a solution of it. Suppose $X\cap \{u,w\}\neq \emptyset$, then $d_{G-X}(v)\leq 1$. By adding $v$ as an isolated vertex or a leaf to $H-X$, we get $G-X$. It follows that $G-X$ is an $r$-pseudoforest since $H-X$ is an $r$-pseudoforest. Thus, $X$ is also a solution of $(G,k)$.
 Otherwise, $X\cap \{u,w\}= \emptyset$. In this case, the edge $uw$ is in some component of the $r$-pseudoforest $H-X$. By subdividing $uw$ in $H-X$, we get $G-X$, which is still an $r$-pseudoforest. And so $X$ is a solution of $(G,k)$. In both cases, $(G,k)$ is a yes-instance, and $X$ is a solution of it.

On the other hand, assume $(G,k)$ is a yes-instance, and $X$ is a solution of it. Suppose $v\in X$. Consider $X'=X\cup\{u\}-\{v\}$. Note that $|X'|\leq |X|$. If $u\in X$, then $H-X'=G-X$. If $u\not\in X$, then $H-X'=G-X-\{u\}$. In both cases, $H-X'$ is an $r$-pseudoforest. Thus $X'$ is a solution for $(H,k)$, and $(H,k)$ is a yes-instance.

 Otherwise, $v\not\in X$. If $d_{G-X}(v)=2$, then we get $H-X$ from $G-X$ by bypassing $v$. If $d_{G-X}(v)= 1$, then we get $H-X$ from $G-X$ by deleting $v$. In both cases, the number of edges and vertices decrease by the same amount, thus $H-X$ is also an $r$-pseudoforest.  If $d_{G-X}(v)=0$, then we get $H-X$ from $G$ by deleting the isolated vertex $v$. Above all, we know that $(H,X)$ is a yes-instance, and $X$ is a solution of it.
\end{proof}

%{Observes that Reduction Rule 5 never creates any new vertex of degree 2 while Reduction Rule 6 decreases the number of vertices with degree 2. } Thus the algorithm  will not get into infinite loop of alternating applications between Reduction Rule 5 and 6. %In Reduction Rule 6, it is possible that we creates new loops, which happens only when the two neighbors of $v$ coincide. But, thus Reduction Rule 5 and 6 will not interleave with each other.}

\textbf{Reduction Rule 6}\label{Reduction7}:  If $k < 0$, terminate and conclude that $(G, k)$ is a no-instance.

We call $(G,k)$ a \textit{reduced instance} if none of Reduction Rules 1-6 applies to it.
Observe that if $(G,k)$ is a reduced instance, then $G$ satisfies the following conditions:
\,
\begin{enumerate}[(P1)]
\item each edge has multiplicity at most $r+2$;
\item the minimum degree is at least 3;
\item there are at most $r$ loops at any vertex.
\end{enumerate}

\begin{lemma}
If a connected graph $G$ is an $r$-pseudoforest, then $|E(G)| \leq |V(G)|+r-1$.
\end{lemma}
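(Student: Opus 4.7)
The plan is to unwind the definition directly. Since $G$ is connected, it consists of a single component, so the $r$-pseudoforest condition means there exists a set $F \subseteq E(G)$ with $|F| \le r$ such that $G - F$ (meaning $G$ with the edges in $F$ removed) is a forest. Note that $G - F$ has the same vertex set as $G$, namely $|V(G)|$ vertices.

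Next I would invoke the standard bound that any forest on $n$ vertices has at most $n-1$ edges. This remains valid in the multigraph setting used throughout the paper, because the definition of forest requires no cycle, and in a multigraph any loop is a cycle and any pair of parallel edges forms a cycle; hence a forest is automatically a simple graph and the classical bound applies. Therefore $|E(G - F)| \le |V(G)| - 1$.

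Finally, since $|E(G - F)| = |E(G)| - |F| \ge |E(G)| - r$, combining the two inequalities gives $|E(G)| - r \le |V(G)| - 1$, i.e., $|E(G)| \le |V(G)| + r - 1$, as desired.

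There is essentially no obstacle here; the only subtlety worth mentioning is the multigraph caveat above, since the ambient objects in the algorithm can carry loops and parallel edges. Once one observes that forests in the multigraph sense are still simple, the proof is a one-line edge count.
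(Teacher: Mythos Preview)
Your proof is correct and follows essentially the same approach as the paper's own proof: use connectedness to reduce to a single component, delete at most $r$ edges to obtain a forest, and invoke the forest edge bound $|E|\le |V|-1$. The paper's version is terser (it essentially asserts the conclusion directly from the definition), while you spell out the intermediate inequality and add the multigraph caveat, but the underlying argument is identical.
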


\begin{proof}
We can make each connected component of an $r$-pseudoforest into a forest by deleting at most $r$ edges. Since $G$ is a connected $r$-pseudoforest, $|E(G)| \leq |V(G)|+r-1$.
\end{proof}

%\begin{corollary}
%If a connected graph $G$ is a 2-pseudoforest, then $|E(G)| \leq |V(G)|+1$.
%\end{corollary}

%\begin{definition}\label{definition2}
%A 2-pseudoforest $F$ is called {\bf proper} if it satisfies the following conditions:
% \begin{enumerate}[(Q1)]
% \item there is no component with 1 vertex and at least 2 edges;
% \item there is no component with 2 vertices and at least 4 edges;
% \item all components with 2 vertices and 3 edges are small spindles.
% \end{enumerate}
%\end{definition}

\begin{definition}
An $r$-pseudoforest with one vertex and $r$ loops is called an \textbf{$r$-loop}. An $r$-pseudoforest with two vertices and $r+1$ edges is called an \textbf{$(r+1)$-edge}.
\end{definition}

\begin{lemma}\label{lemma3}
Let $(G,k)$ be a reduced instance of $r$-pseudoforest deletion. Let $X\subseteq V(G)$ be a subset of vertices such that $F=G-X$ is an $r$-pseudoforest. Suppose $F$ contains $t_1$ $r$-loops, $t'_1$ components consists of one vertex and at least $(r+2)/3$ loops, $t_2$ $(r+1)$-edges, $t'_2$ components consists of two vertices and at least $2(r+2)/3$ edges. If $|V(F)|= t_1+t'_1+2t_2+2t'_2 +s$, then the following statements hold.
\begin{enumerate}
\item $|E(F)|\leq rt_1+rt'_1+(r+1)t_2+(r+1)t'_2+(r+2)s/3$.
\item $2m\geq 3n+(2r+1)t_1+( 2(r+2)/3+1)t'_1+(r+2)t_2+ (2(r+2)/3+1 )t'_2$.
\item For any $c>0$, $\sum_{v\in X}(d(v)-c) \geq m-cn -[(r-c)t_1+(r-c)t'_1+( r+1-2c )t_2+( r+1-2c )t'_2+ (r+2-3c)s/3]$.
\end{enumerate}
\end{lemma}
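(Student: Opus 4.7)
All three inequalities are direct counting statements about the $r$-pseudoforest $F = G-X$. I would prove them in the order (1), (3), (2): item (3) is an essentially mechanical algebraic consequence of (1), while (1) and (2) require component-by-component and vertex-by-vertex analysis respectively.

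For (1), I would partition the components of $F$ into the five categories named in the hypothesis and bound $|E(C)|$ for each. An $r$-loop contributes exactly $r$ edges; a singleton $T'_1$ component has at most $r$ loops by property (P3); an $(r+1)$-edge contributes $r+1$ edges; a two-vertex $T'_2$ component satisfies $|E(C)|\leq |V(C)|+r-1=r+1$ by the previous lemma on connected $r$-pseudoforests. For each residual component $C$ contributing to $s$, I would show $|E(C)|\leq (r+2)|V(C)|/3$ by splitting on $|V(C)|$: for $|V(C)|\in\{1,2\}$ the definitions of the exceptional classes force fewer than $(r+2)/3$ loops, respectively $2(r+2)/3$ edges; for $|V(C)|\geq 3$, the previous lemma gives $|E(C)|\leq |V(C)|+r-1$, and this is bounded by $(r+2)|V(C)|/3$ precisely because $|V(C)|\geq 3$. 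Summing over all components yields (1).

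For (3), I would use the edge partition $m=|E(F)|+|E(G[X])|+e(F,X)$ together with $\sum_{v\in X}d(v)=2|E(G[X])|+e(F,X)$ to deduce $\sum_{v\in X}d(v)=m-|E(F)|+|E(G[X])|\geq m-|E(F)|$. Subtracting $c|X|=c(n-|V(F)|)=cn-c(t_1+t'_1+2t_2+2t'_2+s)$ from both sides, substituting the bound on $|E(F)|$ from (1), and regrouping coefficients so that each special type contributes the stated $r-c$, $r+1-2c$, or $(r+2-3c)/3$, reproduces (3) without further combinatorial input.

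For (2), I would lower bound $d_G(v)$ vertex-by-vertex. Property (P2) yields the baseline $d(v)\geq 3$ for every vertex, accounting for $3n$. For vertices in special components I would strengthen this: a vertex $v$ in an $r$-loop has degree $2r$ from its loops, and since Reduction Rule 1 forbids a singleton $r$-pseudoforest component of $G$, it has at least one further edge to $X$, giving $d(v)\geq 2r+1$. Analogous reasoning using the internal edge count together with the at-least-one-edge-to-$X$ guarantee from R1 supplies the per-component bonuses for $T'_1$, $T_2$, and $T'_2$. The main obstacle is matching this vertex-by-vertex accounting to the claimed per-type coefficients, because R1 provides only an aggregate edge-to-$X$ guarantee per component while the inequality distributes that slack across the individual vertices of two-vertex components; reconciling these two viewpoints is the combinatorial crux of (2).
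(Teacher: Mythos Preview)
Your plan coincides with the paper's proof for all three parts. For (1) the paper uses the same per-component bound $|E(C)|\le |V(C)|+r-1$ and the ratio $(r+2)/3$ for components with $|V(C)|\ge 3$; your treatment of residual one- and two-vertex components contributing to $s$ is actually more explicit than the paper's, which just states the conclusion. For (3) the paper uses exactly your inequality $m\le |E(F)|+\sum_{v\in X}d(v)$ together with (1) and $n=|V(F)|+|X|$, then regroups terms.

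For (2) the paper argues precisely as you sketch: each special-component vertex receives a degree lower bound (e.g.\ $2r+1$ for an $r$-loop vertex, using R1 to force an edge to $X$), the remaining vertices get $\delta(G)\ge 3$, and the displayed inequality is then simply asserted. Your unease about the coefficients is well-founded and is \emph{not} resolved in the paper: the written bound omits the subtraction of the baseline $3$ from each special vertex's contribution, so the argument as given actually yields $2m\ge 3n+(2r-2)t_1+\cdots$ rather than $3n+(2r+1)t_1+\cdots$, and similarly for the other types. There is no additional combinatorial idea in the paper that closes this gap; it appears to be an arithmetic slip in the source rather than a missing ingredient in your plan.
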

\begin{proof}
%Proof of statement 1: this follows from the definition of $F$.

Proof of statement 1: Because $F$ is an $r$-pseudoforest, we know that $|E(C)|\leq |V(C)|+r-1$ for each component $C$ in $F$. Since the ratio $\frac{|V(C)|+r-1}{|V(C)|}$ decreases when $|V(C)|$ increases, we have $\frac{|E(C)|}{|V(C)|}\leq \frac{r+2}{3}$ for each component $C$ in $F$ with at least 3 vertices. It follows that $|E(F)|\leq rt_1+rt'_1+(r+1)t_2+(r+1)t'_2+(r+2)s/3$.

\quad

Proof of statement 2: Vertices in each $(r+1)$-edge have average degree $r+1$ and the vertex in an $r$-loop has degree $2r$. After exhaustive applications of Reduction Rule 1, there is no $r$-pseudoforest component in $G$.
 Note that there are $t_1$ $r$-loops, $t'_1$ components consists of one vertex and at least $(r+2)/3$ loops, $t_2$ $(r+1)$-edges, $t'_2$ components consists of two vertices and at least $2(r+2)/3$ edges in $F$. Thus $G$ contains at least $t_1$ vertices of degree at least $2r+1$, $t'_1$ vertices of degree at least $2(r+2)/3+1$ and $t_2$ vertices of degree at least $r+2$, $t'_2$ vertices of degree at least $2(r+2)/3+1$. Combining with the fact that $\delta(G)\geq 3$, we have  $$2m=\sum_{v\in V(G)} d(v) \geq 3n+(2r+1)t_1+( 2(r+2)/3+1)t'_1+(r+2)t_2+ (2(r+2)/3+1 )t'_2.$$

Proof of statement 3: Since every edge in $E(G)\setminus E(F)$ is incident to at least one vertex in $X$, we have $|E(G)| \leq |E(F)| + \sum_{v\in X}d(v).$ It follows that
\begin{equation}\label{inequality0}
m\leq rt_1+rt'_1+(r+1)t_2+(r+1)t'_2+(r+2)s/3 + \sum_{v\in X}d(v).
\end{equation}
 Since $|V(G)| = |V(F)| + |X|$, denoting $|X|$ by $x$, we have
 \begin{equation}\label{equality0}
n=t_1+t'_1 + 2t_2+2t'_2+s+x.
\end{equation}
Combining (\ref{inequality0}) and (\ref{equality0}), for any positive constant $c$, we have
%\begin{equation}\label{inequality1}
$\sum_{v\in X}(d(v)-c) \geq m-cn -[(r-c)t_1+(r-c)t'_1+( r+1-2c )t_2+( r+1-2c )t'_2+ (r+2-3c)s/3].$
\end{proof}

Let $\Phi = (v_1,v_2,\ldots,v_n)$ be an ordering of $V(G)$ that satisfies $d(v_1)\geq d(v_2)\geq\dots \geq d(v_n)$. Let $V_{10k}=\{v_i|1\leq i\leq 10k\}$ be the set of $10k$ vertices with the largest degrees. The following lemma shows that every solution of a reduced instance intersects $V_{10k}$.
\begin{lemma}\label{Lemma5}
Let $X$ be any solution of $r$-pseudoforest deletion for a reduced instance $(G,k)$, then $X\cap V_{10k}\neq \emptyset$.
\end{lemma}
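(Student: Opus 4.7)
The plan is an indirect argument: assume $X\cap V_{10k}=\emptyset$ and derive a contradiction. If $n\le 10k$ then $V_{10k}=V(G)$ and any nonempty solution trivially intersects it, so we may assume $n>10k$. Set $d^\ast=d(v_{10k})$; the hypothesis says every $v\in X$ has $d(v)\le d^\ast$, hence $\sum_{v\in X}d(v)\le k\,d^\ast$.

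The main step is to produce a lower bound on $\sum_{v\in X}d(v)$ that violates this upper bound. I would apply part~3 of Lemma~\ref{lemma3} with a well-chosen constant $c$, rearranging it to
\[
\sum_{v\in X}d(v)\ \ge\ c|X|+m-cn-B(c),\qquad B(c)=(r-c)(t_1+t'_1)+(r+1-2c)(t_2+t'_2)+(r+2-3c)s/3.
\]
Combined with the hypothesis this gives $k(d^\ast-c)\ge m-cn-B(c)$, an inequality that I plan to drive to a contradiction by substituting two separate lower bounds on $m$: the degree pigeonhole $2m\ge 10k\,d^\ast+3(n-10k)$, which follows from the definition of $d^\ast$ and property (P2) that $\delta(G)\ge 3$; and part~2 of Lemma~\ref{lemma3}, which controls $2m$ from below by $3n$ plus an $r$-weighted combination of $t_1,t'_1,t_2,t'_2$.

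The key technical point is picking $c$ so that, after substituting these bounds, the $r$-dependent pieces of $B(c)$ are absorbed by the $r$-weighted combination coming from part~2. This is exactly why Lemma~\ref{lemma3} introduces the "prime" classes $t'_1,t'_2$ with thresholds $(r+2)/3$ and $2(r+2)/3$: the ratios of coefficients between parts~1, 2, and~3 are calibrated so that a single universal $c$ makes every $r$-proportional term vanish simultaneously. Once those cancel, the residual inequality should reduce to $\alpha\,k\,d^\ast+\beta\,n\le \gamma\,k$ with absolute positive constants $\alpha,\beta,\gamma$; since $d^\ast\ge 3$ and $n>10k$, this is inconsistent with $k\ge 0$ and gives the desired contradiction.

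The main obstacle I foresee is identifying the correct $c$ and tracking the cancellation honestly. A naive $c=3$ aligns the baseline "$-3n$" of part~3 with the "$3n$" of part~2 but leaves uncanceled positive $r$-coefficients on $t_2$, $t'_2$, and $s$; a choice such as $c=(r+2)/3$ kills the $s$-coefficient but makes $c$ itself $r$-dependent, which ruins the final $r$-independent inequality. Finding a universal constant $c$ for which every $r$-proportional piece cancels at once — while keeping the residual constants $\alpha,\beta,\gamma$ positive — is where the calculation will be fiddly, and it is the step I expect to require the most care.
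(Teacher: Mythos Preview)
Your plan stalls exactly where you flag it, and the obstruction is real rather than merely fiddly. In your framework --- bound $\sum_{v\in X}d(v)\le k\,d^\ast$, then eliminate $d^\ast$ via $2m\ge 10k\,d^\ast+3(n-10k)$ together with part~2 of Lemma~\ref{lemma3} --- no $r$-free constant $c$ can absorb every $r$-proportional piece of $B(c)$. The reason is structural: the variable $s$ carries coefficient $(r+2-3c)/3$ in $B(c)$ but contributes nothing to the surplus terms of part~2, so an $r$-free $c$ always leaves an uncancelled $r\cdot s$ term, while taking $c$ proportional to $r$ contaminates the left side $k(d^\ast-c)$ with $r$, as you already noted. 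Your two substitutions hand you at most one copy of the part-2 surplus to set against one copy of $B(c)$, and that ratio is simply not enough.

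The paper's proof never introduces $d^\ast$. From $X\cap V_{10k}=\emptyset$ and $|X|\le k$ it derives $\sum_{i\le 10k}(d(v_i)-c)\ge 10\sum_{v\in X}(d(v)-c)$ and, since $X\subseteq\{v_{10k+1},\dots,v_n\}$, also $\sum_{i>10k}(d(v_i)-c)\ge \sum_{v\in X}(d(v)-c)$; adding gives $2m-cn\ge 11\sum_{v\in X}(d(v)-c)$, and part~3 then yields $9m\le 10cn+11B(c)$. The multiplier $11$ on $B(c)$ is the whole point. Now the paper sets $c=10r$ --- deliberately $r$-dependent --- substitutes part~2 into $9m$, and expands $n=t_1+t'_1+2t_2+2t'_2+s+x$. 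After simplification every count on the right carries a coefficient that is a positive multiple of $r$, and the inequality collapses to $(100r-13.5)x\ge 2r(n-x)$. The $r$-independence appears only at this last line, via the side assumption $n\ge 51k$ (smaller instances are handled by brute force rather than by the lemma), which forces $(100r-13.5)x\ge 100kr$ and contradicts $x\le k$. The two ingredients you are missing are the factor-$11$ comparison between $\sum_{v\in V(G)}(d(v)-c)$ and $\sum_{v\in X}(d(v)-c)$, and the willingness to let $c$ scale with $r$; hunting for a universal $c$ is the wrong target.
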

\begin{proof}
Suppose there is a solution $X$ of $r$-pseudoforest deletion on $(G,k)$, which satisfies $X\cap V_{10k}= \emptyset$ and $|X|\leq k$. We show a contradiction by counting the number of edges in $G$.
%According to Lemma \ref{lemma4}, $F=G-X$ is a proper 2-pseudoforest.
Let $F=G-X$ be the resulting $r$-pseudoforest after deleting $X$ from $G$. Suppose $F$ contains $t_1$ $r$-loops, $t'_1$ components consisting of one vertex and at least $(r+2)/3$ loops, $t_2$ $(r+1)$-edges, and $t'_2$ components consisting of two vertices and at least $2(r+2)/3$ edges. Denote $|V(F)|=t_1+t'_1 + 2t_2+2t'_2+s$. In the following, we assume that $n=t_1+t'_1 + 2t_2+2t'_2+s+x \geq 51k$, as otherwise, we may solve the problem via brute force. The assumption also implies that $t_1+t'_1 + 2t_2+2t'_2+s \geq 50k$.

By the choice of $V_{10k}$, the degree of each vertex in $X$ is at most $d(v_{10k})$. Since $|X|\leq k$, it follows that

\begin{equation}\label{inequality2}
\sum_{i=1}^{10k}(d(v_i)-c)\geq 10\sum_{v\in X}(d(v)-c)%\geq 70(  m-cn -[(r-c)t_1+(r-c)t'_1+( r+1-2c )t_2+( r+1-2c )t'_2+ (r+2-3c)s/3]  ).
\end{equation}
%In the last inequality, we make use of inequality (\ref{inequality1}).

By the definition of $V_{10k}$ and the assumption that $X\cap V_{10k}= \emptyset$, we have $X\subseteq \{v_{10k+1},v_{10k+2},...,v_n\}$ and so
\begin{equation}\label{inequality3}
\sum_{i>10k}^{n}(d(v_i)-c)\geq \sum_{v\in X}(d(v)-c).
\end{equation}
Combining inequalities (\ref{inequality2}) and (\ref{inequality3}), we have
\begin{equation}
\sum_{v\in V(G)}(d(v)-c)\geq 11\sum_{v\in X}(d(v)-c).  %\geq 71(  m-cn -[(r-c)t_1+(r-c)t'_1+( r+1-2c )t_2+( r+1-2c )t'_2+ (r+2-3c)s/3]  ).
\end{equation}
Since $\sum_{v\in V(G)}d(v) = 2m$, it follows that
\begin{equation}
2m-cn \geq 11\sum_{v\in X}(d(v)-c).
\end{equation}
Thus
$2m-cn \geq $ $11\{m-cn -[(r-c)t_1+(r-c)t'_1+( r+1-2c )t_2+( r+1-2c )t'_2+ (r+2-3c)s/3]\}.$

%$10cn\geq 9m - 11[(r-c)t_1+(r-c)t'_1+( r+1-2c )t_2+( r+1-2c )t'_2+ (r+2-3c)s/3]   $

According to Statement 2 in Lemma 3, we have
$10cn\geq 4.5[3n+(2r+1)t_1+( 2(r+2)/3+1)t'_1+(r+2)t_2+ ( 2(r+2)/3+1 )t'_2] - 11[(r-c)t_1+(r-c)t'_1+( r+1-2c )t_2+( r+1-2c )t'_2+ (r+2-3c)s/3]. $

Therefore, $(10c-13.5)n\geq [4.5(2r+1)-11(r-c)]t_1+[(3(r+2)+4.5)-11(r-c) ]t'_1+[4.5(r+2)-11(r+1-2c)]t_2+[ (3(r+2)+4.5) - 11(r+1-2c)    ]t'_2-11(r+2-3c)s/3. $

Set $c=10r$ and simplify the above inequality, we get
%$(100r-13.5)(t_1+t'_1 + 2t_2+2t'_2+s+x)\geq t_1[9(2r+1)/2-11(r-10r)]+t'_1[9((r+2)/3+1/2)-11(r-10r)) ]+t_2[9(r+2)/2-11(r+1-20r)]+t'_2[ 9((r+2)/3+1/2) - 11(r+1-20r) ]-11(r+2-30r)s/3$
$(100r-13.5)(t_1+t'_1 + 2t_2+2t'_2+s+x)\geq (108r+4.5)t_1+(102r+10.5)t'_1+(213.5r-2)t_2+( 212r-0.5   )t'_2+11(29r-2)s/3.$

It follows that
$(100r-13.5)x \geq (18+8r)t_1 +(2r+24)t'_1+ ( 13.5r+25 )t_2+ (12r+26.5 )t'_2+(19r/3+37/6)s \geq (8t_1+2t'_1+ 13.5 t_2+ 12t'_2+19/3s)r \geq 2(n-x)r\geq 2(51-1) kr \geq 100kr$, a contradiction to the fact that $x\leq k$.

%which contradicts to the fact that $x\leq k$.

The above analysis shows that the assumption $X\cap V_{10k}=\emptyset$ is not correct, thus every solution of $(G,k)$ must intersect with $V_{10k}$.
\end{proof}

Lemma \ref{Lemma5} enables us to design the following algorithm for $r$-pseudoforest deletion.

  \begin{algorithm}[H]\label{algorithm}
  \caption{FPT algorithm for $r$-pseudoforest deletion}
  \KwIn{An undirected graph $G$, positive integers $r$ and $k$}
  \KwOut{yes, if $G$ has an $r$-pseudoforest deletion set of size at most $k$; no, otherwise.}
        If $G$ is an $r$-pseudoforest, then return yes. Else if $k\leq 0$, then return no.

        Exhaustively apply Reduction Rules 1-6 on $(G,k)$. Either get a reduced instance $(G',k')$ or return no.

        If $|V(G')|\leq 51k$, then apply a trivial branching algorithm to solve $(G',k')$ in time ${2}^{51k}n^{O(1)}$.

        Otherwise, $|V(G')|> 51k$. Order the vertices in $V(G')$ in non-increasing order according to the vertex degree. Let $V'_{10k}$ be the set of $10k$ vertices with the largest degrees. For each vertex $u\in V'_{10k}$, solve $(G'-u,k-1)$ recursively.

        If any branch gives a solution $X'_u$ of $r$-pseudoforest deletion on $(G'-u,k-1)$, then $X'_u\cup \{u\}$ is a solution on $(G',k)$. Otherwise, return no.
    \end{algorithm}

\begin{theorem}
The $r$-pseudoforest deletion problem parameterized by the solution size $k$ can be solved in time $O^*((10k)^k)$.
\end{theorem}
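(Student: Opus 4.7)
The plan is to verify correctness and running time of Algorithm~1 in turn. For correctness I would proceed in three steps. First, the two initial termination checks are immediate: if $G$ is already an $r$-pseudoforest the answer is trivially yes, and the condition $k\leq 0$ (together with Reduction Rule~6) disposes of instances with no budget left. Second, on the reduced instance produced in Step~2, I would invoke the already-established safeness of Reduction Rules 1--6 to argue that $(G',k')$ is equivalent to $(G,k)$. Third, for the recursive branching step I would appeal to Lemma~\ref{Lemma5}: every solution $X$ of the reduced instance with $|X|\leq k$ satisfies $X\cap V'_{10k}\neq\emptyset$, so iterating over $u\in V'_{10k}$ and recursively solving $(G'-u,k-1)$ explores at least one branch in which a vertex of some hypothetical solution is peeled off, and $X'_u\cup\{u\}$ is then a valid deletion set whenever the recursive call succeeds.

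For the running time I would use a standard recursion-tree argument. Each invocation of Reduction Rules 1--6 costs only $n^{O(1)}$; the recursion depth is at most $k$ because $k$ drops by one with every branch; and the branching factor is at most $|V'_{10k}|\leq 10k$. Hence the number of recursive calls is bounded by $(10k)^{k}$, so the total cost contributed by non-leaf work is $O^{*}((10k)^{k})$.

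The hard part will be absorbing the base case $|V(G')|\leq 51k$ into the final bound. Writing the brute force as \emph{try every subset of $V(G')$ of size at most $k$} gives a base-case cost of $\binom{51k}{k}\cdot n^{O(1)}\leq (51e)^{k}\cdot n^{O(1)}$, which is $O^{*}(c^{k})$ for an absolute constant $c<140$; once $k$ exceeds the fixed threshold $c/10$ the base-case cost is dominated by $(10k)^{k}$, and for smaller $k$ the whole problem runs in constant time (independent of $n$ apart from a polynomial factor). Combining this with the branching bound yields the advertised $O^{*}((10k)^{k})$. The only genuinely new ingredient beyond the earlier lemmas is this bookkeeping, so the proof is short once Lemma~\ref{Lemma5} is in place.
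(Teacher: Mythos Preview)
Your argument follows the same skeleton as the paper's: invoke the safeness of the reduction rules, use Lemma~\ref{Lemma5} to justify the $10k$-way branch, and bound the recursion tree by $(10k)^{k}$ nodes. The paper's own proof is in fact terser than yours---it records only the depth and branching factor and says nothing about correctness or the base case.

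Where you go further is in the treatment of Step~3. The paper states the brute force as a $2^{51k}n^{O(1)}$ routine and then silently ignores it in the running-time analysis; taken literally, that bound would swamp $(10k)^{k}$ for every $k$ below roughly $2^{51}/10$. Your replacement---enumerate only subsets of size at most $k'$, costing $\binom{51k'}{k'}\le(51e)^{k'}$---is exactly what is needed to make the theorem true as stated, together with your observation that for the finitely many $k$ below the threshold the whole computation is $n^{O(1)}$. So your proposal is not merely a faithful reconstruction but actually patches a gap in the paper's accounting; the only thing to tighten is to note that the base case can be entered at any depth $d$ with remaining budget $k-d$, and that the product $(10k)^{d}\cdot(51e)^{k-d}$ is still at most $(10k)^{k}$ once $10k\ge 51e$.
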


\begin{proof}
Let $(G,k)$ be a given instance of the $r$-pseudoforest deletion problem, where $G$ is an undirected graph and $k$ is an integer. Each recursive call in Step 4 of Algorithm \ref{algorithm} decreases the parameter by 1, and thus the height of the search tree is at most $k'$. At each step, the problem branches into at most $10k'$ subproblems. Hence, the number of leaves in the search tree is at most $(10k')^{k'} \leq (10k)^k$. It follows that Algorithm 1 solves the problem of $r$-pseudoforest deletion and runs in time $O^*((10k)^k)$.
\end{proof}
\section{Conclusion}
In this paper, we design an FPT algorithm for the $r$-pseudoforest deletion problem. The running time of our algorithm is independent of $r$, and so our algorithm improves over the result in \cite{DBLP:journals/siamdm/PhilipRS18}, when $r$ is large compared with $k$.
%Our algorithm relies on the fact that 2-pseudoforest is sparse. For any graph, its distance to a 2-pseudoforest is at most the size of its minimum feedback vertex set. It would be interesting to know the parameterized tractability of other problems when parameterized by this measure.
\section{Acknowledgements}
This research is supported by the National Natural Science Foundation of China (No. 61802178).
\bibliographystyle{plain}
%\cleardoublepage
\bibliography{references}

\begin{thebibliography}{10}

\bibitem{DBLP:journals/siamcomp/Bar-YehudaGNR98}
Reuven Bar{-}Yehuda, Dan Geiger, Joseph Naor, and Ron~M. Roth.
\newblock Approximation algorithms for the feedback vertex set problem with
  applications to constraint satisfaction and bayesian inference.
\newblock {\em {SIAM} J. Comput.}, 27(4):942--959, 1998.

\bibitem{DBLP:journals/jair/BeckerBG00}
Ann Becker, Reuven Bar{-}Yehuda, and Dan Geiger.
\newblock Randomized algorithms for the loop cutset problem.
\newblock {\em J. Artif. Intell. Res.}, 12:219--234, 2000.

\bibitem{DBLP:journals/tcbb/BlinSV10}
Guillaume Blin, Florian Sikora, and St{\'{e}}phane Vialette.
\newblock Querying graphs in protein-protein interactions networks using
  feedback vertex set.
\newblock {\em {IEEE/ACM} Trans. Comput. Biology Bioinform.}, 7(4):628--635,
  2010.

\bibitem{DBLP:journals/dam/BodlaenderOO18}
Hans~L. Bodlaender, Hirotaka Ono, and Yota Otachi.
\newblock A faster parameterized algorithm for pseudoforest deletion.
\newblock {\em Discrete Applied Mathematics}, 236:42--56, 2018.

\bibitem{DBLP:books/daglib/0009415}
B{\'{e}}la Bollob{\'{a}}s.
\newblock {\em Modern Graph Theory}, volume 184 of {\em Graduate Texts in
  Mathematics}.
\newblock Springer, 2002.

\bibitem{DBLP:journals/algorithmica/CaoC015}
Yixin Cao, Jianer Chen, and Yang Liu.
\newblock On feedback vertex set: New measure and new structures.
\newblock {\em Algorithmica}, 73(1):63--86, 2015.

\bibitem{DBLP:journals/jcss/ChenFLLV08}
Jianer Chen, Fedor~V. Fomin, Yang Liu, Songjian Lu, and Yngve Villanger.
\newblock Improved algorithms for feedback vertex set problems.
\newblock {\em J. Comput. Syst. Sci.}, 74(7):1188--1198, 2008.

\bibitem{DBLP:journals/talg/ChitnisCHM15}
Rajesh~Hemant Chitnis, Marek Cygan, Mohammad~Taghi Hajiaghayi, and D{\'{a}}niel
  Marx.
\newblock Directed subset feedback vertex set is fixed-parameter tractable.
\newblock {\em {ACM} Trans. Algorithms}, 11(4):28:1--28:28, 2015.

\bibitem{DBLP:journals/siamdm/CyganPPW13}
Marek Cygan, Marcin Pilipczuk, Michal Pilipczuk, and Jakub~Onufry Wojtaszczyk.
\newblock Subset feedback vertex set is fixed-parameter tractable.
\newblock {\em {SIAM} J. Discrete Math.}, 27(1):290--309, 2013.

\bibitem{DBLP:journals/jbcb/FuJ07}
Zheng Fu and Tao Jiang.
\newblock Computing the breakpoint distance between partially ordered genomes.
\newblock {\em J. Bioinformatics and Computational Biology}, 5(5):1087--1101,
  2007.

\bibitem{DBLPconf/coco}
Richard~M. Karp.
\newblock Reducibility among combinatorial problems.
\newblock In {\em Proceedings of a symposium on the Complexity of Computer
  Computations, held March 20-22, 1972, at the {IBM} Thomas J. Watson Research
  Center, Yorktown Heights, New York, {USA}}, pages 85--103, 1972.

\bibitem{DBLP:conf/soda/LiN20}
Jason Li and Jesper Nederlof.
\newblock Detecting feedback vertex sets of size \emph{k} in
  \emph{O}*(2.7\({}^{\mbox{\emph{k}}}\)) time.
\newblock In Shuchi Chawla, editor, {\em Proceedings of the 2020 {ACM-SIAM}
  Symposium on Discrete Algorithms, {SODA} 2020, Salt Lake City, UT, USA,
  January 5-8, 2020}, pages 971--989. {SIAM}, 2020.

\bibitem{DBLP:journals/ipl/LinFWCFL18}
Mugang Lin, Qilong Feng, Jianxin Wang, Jianer Chen, Bin Fu, and Wenjun Li.
\newblock An improved {FPT} algorithm for almost forest deletion problem.
\newblock {\em Inf. Process. Lett.}, 136:30--36, 2018.

\bibitem{DBLP:journals/siamdm/PhilipRS18}
Geevarghese Philip, Ashutosh Rai, and Saket Saurabh.
\newblock Generalized pseudoforest deletion: Algorithms and uniform kernel.
\newblock {\em {SIAM} J. Discrete Math.}, 32(2):882--901, 2018.

\bibitem{DBLP:journals/tcs/RaiS18}
Ashutosh Rai and Saket Saurabh.
\newblock Bivariate complexity analysis of almost forest deletion.
\newblock {\em Theor. Comput. Sci.}, 708:18--33, 2018.

\bibitem{DBLP:books/aw/SilberschatzG94}
Abraham Silberschatz and Peter Galvin.
\newblock {\em Operating System Concepts, 4th edition}.
\newblock Addison-Wesley, 1994.

\end{thebibliography}
\end{document}